\documentclass[letterpaper, 10 pt, conference]{ieeeconf}  

\IEEEoverridecommandlockouts                              
\overrideIEEEmargins

\usepackage{amsmath}
\usepackage{amssymb}
\usepackage{graphics,graphicx}
\usepackage{verbatim,color}
\usepackage{subfig}
\usepackage{float}
\usepackage[lined,boxed,commentsnumbered]{algorithm2e}
\usepackage{tikz}


\providecommand{\norm}[1]{\left\lVert#1\right\rVert}
\providecommand{\Z}{\mathbb{Z}} 

\newtheorem{definition}{Definition}
\newtheorem{lemma}{Lemma}
\newtheorem{corollary}{Corollary}
\newtheorem{theorem}{Theorem}
\newtheorem{proposition}{Proposition}

\hyphenation{op-tical net-works semi-conduc-tor}

\title{\LARGE \bf
A Game-theoretic Formulation of the Homogeneous Self-Reconfiguration Problem
}

\author{Daniel Pickem$^{1}$, Magnus Egerstedt$^{2}$, and Jeff S. Shamma$^{3}$
\thanks{This research was sponsored by AFOSR/MURI Project \#FA9550-09-1-0538 and ONR Project \#N00014-09-1-0751.}
\thanks{$^{1}$D. Pickem is a Ph.D. student in Robotics,
        Georgia Institute of Technology, Atlanta, USA
        {\tt\small daniel.pickem@gatech.edu}}%
\thanks{$^{2}$M. Egerstedt is with the Faculty of Electrical and Computer Engineering,
        Georgia Institute of Technology, Atlanta, USA
        {\tt\small magnus@gatech.edu}}%
\thanks{$^{3}$J. S. Shamma is with the School of Electrical and Computer Engineering, 
	      Georgia Institute of Technology, Atlanta, USA {\tt\small shamma@gatech.edu}, 
	      and with King Abdullah University of Science and Technology (KAUST), Thuwal, Saudi Arabia {\tt\small jeff.shamma@kaust.edu.sa.}}%
}

\begin{document}
\maketitle
\thispagestyle{empty}
\pagestyle{empty}

\begin{abstract}
In this paper we formulate the homogeneous two- and three-dimensional self-reconfiguration problem over discrete grids as a constrained potential game. We develop a game-theoretic learning algorithm based on the Metropolis-Hastings algorithm that solves the self-reconfiguration problem in a globally optimal fashion. Both a centralized and a fully distributed algorithm are presented and we show that the only stochastically stable state is the potential function maximizer, i.e. the desired target configuration. These algorithms compute transition probabilities in such a way that even though each agent acts in a self-interested way, the overall collective goal of self-reconfiguration is achieved. Simulation results confirm the feasibility of our approach and show convergence to desired target configurations.
\end{abstract}

\IEEEpeerreviewmaketitle

\section{Introduction}
\label{sec:introduction}
Self-reconfigurable systems are comprised of individual agents which are able to connect to and disconnect from one another to form larger functional structures. These individual agents or modules can have distinct capabilities, shapes, or sizes, in which case we call it a heterogeneous system (for example \cite{Fitch2003}). Alternatively, modules can be identical and interchangeable, which describes a homogeneous system (see \cite{Kotay2005}). In this paper, we will present algorithms that reconfigure homogeneous systems and treat self-reconfiguration as a two- and three-dimensional coverage problem.

Self-reconfiguration is furthermore understood to solve the following problem. Given an initial geometric arrangement of cubes (called a configuration) $\mathcal{C}_I$ and a desired target configuration $\mathcal{C}_T$, the solution to the self-reconfiguration problem is a sequence of primitive cube motions that reshapes/reconfigures the initial into the target configuration (see Fig. 1). By configuration we mean a geometric arrangement of a set of agents. The problem setup is then the following. 
\begin{itemize}
 \item The environment $\mathcal{E}$ is a finite two- or three-dimensional discrete grid, i.e. $\mathcal{E} \subseteq \Z^2$ or $\mathcal{E} \subseteq \Z^3$. 
 \item $N$ agents $P = \{1, 2, \dots , N\}$ move in discrete steps through that grid.
 \item Each agent has a restricted action set $\mathcal{R}_i$ which contains only a subset of all its possible actions $A_i$.
 \item An agent's utility or reward $U_i(a \in A)$ is inversely proportional to the distance to the target configuration.
\end{itemize}
Many approaches to self-reconfiguration have been presented in the literature, each with certain short-comings. There have been centralized solutions (both in planning and execution, e.g. \cite{Rus2001}), distributed solutions that required a large amount of communication (see \cite{Fitch2003}) or precomputation (see \cite{Fitch2008}, \cite{Pickem2012}), approaches that were either focused on locomotion (e.g. \cite{Butler2004}) or functional target shape assemblies (e.g. \cite{Kurokawa2008}). Distributed approaches have often relied on precomputation of rulesets (see \cite{Pickem2012}), policies (e.g. \cite{Fitch2008}), or entire sets of paths/folding schemata of agents (e.g. \cite{Cheung2011}).

In this paper, we present a fully decentralized approach to homogeneous self-reconfiguration for which no central decision maker is required. Currently, each agent needs to know the location and shape of the target configuration to compute the utility function of its actions and choose an action. However, no (in the two-dimensional case) or limited communication (in the three-dimensional case) is required to successfully complete a reconfiguration.

The rest of this paper is organized as follows. Section \ref{sec:relatedWork} discusses relevant related work. Section \ref{sec:theoreticalModel} presents the system setup and theoretical formulation of the problem. In Section \ref{sec:deterministicCompleteness} we discuss the completeness of deterministic reconfiguration, which is used in Section \ref{sec:stochasticReconfiguration} to prove the existence of a unique potential function maximizer. In Section \ref{sec:decentralizedAlgorithm} we decentralize the stochastic algorithm and present simulation results in Section \ref{sec:implementation}. Section \ref{sec:conclusion} concludes the paper.

\begin{figure*}
\begin{center}
\includegraphics[width=1.0\textwidth]{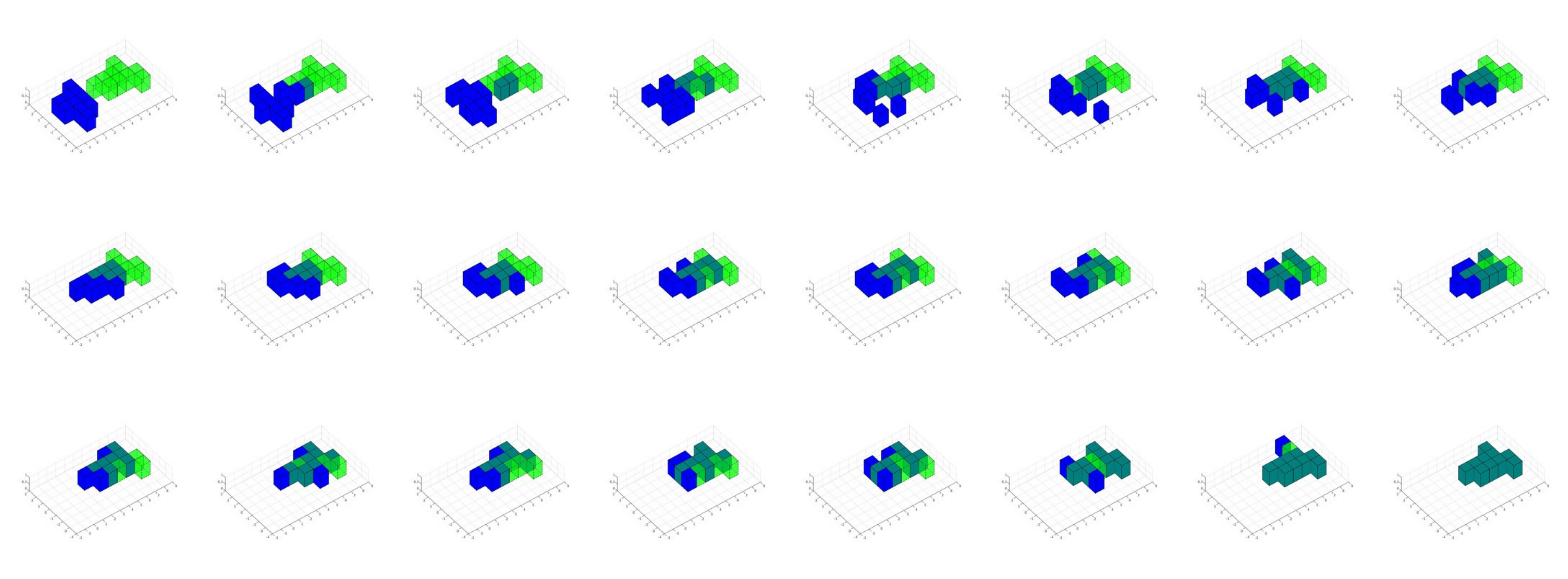}
\end{center}
\caption[Caption]{Example of self-reconfiguration sequence from a random 2D configuration (left) to another random 2D configuration (right). Approximately every $20^{\mathrm{th}}$ time step is shown.}
\label{fig:reconfigurationExample}
\end{figure*}
%
\section{Related Work}
\label{sec:relatedWork}
In this section we want to highlight decentralized approaches in the literature that bear resemblance to methods presented in this paper. Especially relevant to the presented results in this paper are homogeneous self-reconfiguration approaches such as \cite{Butler2004}, and \cite{Kotay2005}, which employ cellular automata and manually designed local rules to model the system. Similar work in \cite{Fitch2008} shows an approach for locomotion through self-reconfiguration represented as a Markov decision process with state-action pairs computed in a distributed fashion. The presented algorithms are decentralized but only applicable to locomotion and not the assembly of arbitrary configurations. 

Precomputed rules are also used in \cite{Pickem2012}, in which graph grammars for self-reconfiguration are automatically generated. Similarly, in \cite{Fox2012} a decentralized approach is presented based on graph grammatical rules, in which automatically generated graph grammars are used to assemble arbitrary acyclic target configurations. Whereas these approaches are able to assemble arbitrary target configurations, they rely on precomputing rulesets for every target configuration.

The algorithms presented in this paper are inspired by the coverage control literature, specifically game-theoretic formulations such as \cite{Arslan2007}, \cite{Lim2011}, and \cite{Zhu2013}. Both static sensor coverage as well as dynamic coverage control with mobile agents are discussed. Note, however, that agents in these papers are limited to movement in two dimensions and operate with a different motion model and most importantly different constraints.

We address these problems with a decentralized algorithm that does not rely on precomputed rulesets, can be used for locomotion as well as the assembly of arbitrary two- and three-dimensional shapes, can handle changing environment conditions as well as changing target configurations, and is scalable to large number of modules due to its decentralized nature.

\section{Problem Formulation}
\label{sec:theoreticalModel}
In this work, we represent agents as cubic modules that move through a discrete lattice or environment $\mathcal{E} = \Z^d$ in discrete steps.\footnote{Since we present two- and three-dimensional self-reconfiguration, throughout this paper, the dimensionality $d$ will be $d \in \{2,3\}$.} Without loss of generality these cubes have unit dimension. Therefore, an agent's current state or action $a_i$ (in a game theoretic sense) is an element $a_i \in \Z^d$. Note that an agent's action is equivalent to its position in the lattice. Cubes can be thought of as geometric embeddings of the agents in our system. A collection of agents is furthermore called a \textit{configuration}. Therefore, a configuration $\mathcal{C}$ composed of $N$ agents is a subset of the representable space $\mathbb{Z}^{dN}$ (see \cite{Pickem2013}). Moreover, we will deal with homogeneous configurations, in which all agents have the same properties and are completely interchangeable.
\subsection{Motion Model}
\label{sec:motionModel}
In the sliding cube model (see \cite{Butler2001}, \cite{Butler2004}, \cite{Pickem2012}, \cite{Pickem2013}, \cite{Rus2001}), a cube is able to perform two primitive motions - a sliding and a corner motion. In general, a motion specifies a translation along coordinate axes and is represented by an element $m \in \Z^d$. A sliding motion is characterized by $\norm{m_s}_{L_1} = 1$, i.e. $m_{s,i} = 1$ for one and only one coordinate $i \in 1, \dots d$, which translates a cube along one coordinate axis. A corner motion on the other hand is defined by $\norm{m_c}_{L_1} = 2$ such that $m_{c,i} = 1$ for exactly two coordinates $i \in 1, \dots d$, which translates a cube along two dimensions.

\subsection{Game-theoretic Formulation}
In this section we formulate the homogeneous self-reconfiguration as a potential game (see \cite{Monderer1996}), which is a game structure amenable to globally optimal solutions. Generally, a game is specified by a set of players $i \in P = \{1, 2, \dots, N\}$, a set of actions $\mathcal{A}_i$ for each player, and a utility function $U_i(a) = U_i(a_i, a_{-i})$ for every player $i$. In this notation, $a$ denotes a joint action profile $a = (a_1, a_2, \dots, a_N)$ of all $N$ players, and $a_{-i}$ is used to denote the actions of all players other than agent $i$. \\
In a constrained game, the actions of agents are constrained through their own and other agents' actions. In other words, given an action set $\mathcal{A}_i$ for agent $i$, only a subset $\mathcal{R}_i(a)) \subset \mathcal{A}_i$ is available to agent $i$. A constrained potential game is furthermore defined as follows.
%
\begin{definition}
\label{def:constrained_game}
 A \textit{constrained exact potential game} (see \cite{Zhu2013})  is a tuple $G = (\mathcal{P}, \mathcal{A},$ $\{U_i(.)\}_{i \in \mathcal{P}}, \{R_i(.)\}_{i \in \mathcal{P}}, \Phi(A))$, where 
 \begin{itemize}
  \item $\mathcal{P} = \{1, \dots, N\}$ is the set of $N$ players
  \item $\mathcal{A} =  \mathcal{A}_1 \times \dots \times \mathcal{A}_N$ is the product set of all agents' action sets $\mathcal{A}_i$
  \item $U_i: A \rightarrow \mathbb{R}$ are the agents' individual utility functions
  \item $R_i: \mathcal{A} \rightarrow 2^{\mathcal{A}_i}$ is a function that maps a joint action to a restricted action set for agent $i$
 \end{itemize}
 Additionally, the agents' utility functions are aligned with a global objective function or potential $\Phi: \mathcal{A} \rightarrow \mathbb{R}$ if for all agents $i \in \mathcal{P}$, all actions $a_i, a_i' \in R_i(a)$, and actions of other agents $a_{-i} \in \prod_{j \ne i} \mathcal{A}_j$ the following is true
 \[
  U_i(a_i', a_{-i}) - U_i(a_i, a_{-i}) = \Phi(a_i', a_{-i}) - \Phi(a_i, a_{-i})
 \]
\end{definition}

The last condition of Def. \ref{def:constrained_game} implies an alignment of agents' individual incentives and the global goal. Therefore, under unilateral change (only agent $i$ changes its action from $a_i$ to $a_i'$) the change in utility for agent $i$ is equivalent to the change in the global potential $\Phi$. This is a highly desirable property since the maximization of all agents' individual utilities yields a maximum global potential. 
We can now formulate the self-reconfiguration problem in game theoretic terms and show that it is indeed a constrained potential game. 
\begin{definition}
\label{def:self_reconfiguration}
\textit{Game theoretic self-reconfiguration} can be formulated as a constrained potential game, where the individual components are defined as follows:
\begin{itemize}
 \item The set of players $\mathcal{P} = \{1, 2, \dots , N\}$ is the set of all $N$ agents in the configuration.
 \item The action set of each agent $\mathcal{A}_i = \Z^d$ is a set of discrete lattice positions (or a finite or infinite subset of $\Z^d$).
 \item The utility function of each agent is $U_i(a) = \frac{1}{\mathrm{dist}(a_i, \mathcal{C}_T) + 1}$. Here, $\mathcal{C}_T$ is the target configuration and $\mathrm{dist}(a_i, \mathcal{C}_T) = \min_{a_j \in \mathcal{C}_T} \norm{a_i - a_j}$.
 \item The restricted action sets $R_i(a \in A)$ are computed according to Section \ref{sec:actionSetComputation}.
 \item The global potential $\Phi(a \in A) = \displaystyle \sum_{i \in \mathcal{P}} U_i(a)$.
\end{itemize}
Note that the utility of an agent is independent of all other agents' actions and depends exclusively on its distance to the target configuration. An agent's action set, however, is constrained by its own as well as other agents' actions. The goal of the game theoretic self-reconfiguration problem is to maximize the potential function, i.e. 
\[
 \displaystyle \max_{a \in \mathcal{A}} \Phi(a) = \max_{a \in \mathcal{A}} \sum_{i \in \mathcal{P}} U_i(a)
\]
\end{definition}
This can be interpreted as a coverage problem where the goal of all agents is to cover all positions in the target configuration. Therefore maximizing the potential is equivalent to maximizing the number of agents that cover target positions $a_i \in \mathcal{C}_T$. The following propositions shows that this formulation indeed yields a potential game. 
%
\begin{proposition}
\label{prop:potentialGame}
The self-reconfiguration problem in Def. \ref{def:self_reconfiguration} constitutes a constrained potential game with $\Phi(a) = \displaystyle \sum_{i \in \mathcal{P}} U_i(a)$ and $U_i(a) = \frac{1}{\mathrm{dist}(a_i, \mathcal{C}_T) + 1}$.
\end{proposition}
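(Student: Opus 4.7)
The plan is to verify the alignment condition from Definition~\ref{def:constrained_game}, which is the only non-trivial requirement since the other elements of the tuple $G$ are already specified in Definition~\ref{def:self_reconfiguration}. The key structural observation is that the utility function $U_i(a) = \tfrac{1}{\mathrm{dist}(a_i, \mathcal{C}_T) + 1}$ depends only on the position $a_i$ of agent $i$ itself, and not at all on the joint action $a_{-i}$ of the other agents. This is the unusual feature that will make the argument go through in essentially one line.

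First, I would fix an agent $i \in \mathcal{P}$, a joint profile $a = (a_i, a_{-i})$, and two admissible actions $a_i, a_i' \in R_i(a)$. Expanding the potential directly,
\[
\Phi(a_i', a_{-i}) - \Phi(a_i, a_{-i}) = \sum_{j \in \mathcal{P}} \bigl[ U_j(a_i', a_{-i}) - U_j(a_i, a_{-i}) \bigr].
\]
Next, I would split the sum into the term $j = i$ and the terms $j \neq i$. For every $j \neq i$, the value $U_j$ depends only on the coordinate $a_j$, which is part of $a_{-i}$ and therefore unchanged under the unilateral deviation of agent $i$. Consequently each summand with $j \neq i$ vanishes, and the entire difference reduces to $U_i(a_i', a_{-i}) - U_i(a_i, a_{-i})$, which is exactly the alignment identity required by Definition~\ref{def:constrained_game}.

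Finally, I would note that the game is \emph{constrained} in the sense of the definition because the admissible unilateral deviations of agent $i$ are drawn from $R_i(a)$ rather than from all of $\mathcal{A}_i$; the alignment identity holds a fortiori on this restricted set since the computation above did not use any property of the deviation other than that agent $i$ alone moved. The game thus fits every clause of Definition~\ref{def:constrained_game} with the stated $U_i$ and $\Phi$.

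There is essentially no hard step: the only conceptual subtlety worth flagging in the writeup is why the proof still yields a meaningful coverage formulation even though agents' utilities are independent of one another. The answer, which is really content for later sections, is that coupling between agents enters the game solely through the restricted action sets $R_i(a)$ (collisions and connectivity constraints), rather than through the payoffs; the simple additive-separable potential is precisely what makes the alignment condition hold so cleanly here.
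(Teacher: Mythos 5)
Your proof is correct, and the computation at its heart is the same cancellation the paper uses; the difference is one of packaging. You verify the alignment identity directly: expand $\Phi(a_i',a_{-i})-\Phi(a_i,a_{-i})$ as a sum over agents and observe that every term with $j\neq i$ vanishes because $U_j$ depends only on $a_j$. The paper instead routes the argument through the Wonderful Life Utility: it introduces a null action $a_i^0$ (removing agent $i$), defines the marginal contribution $\Phi(a_i,a_{-i})-\Phi(a_i^0,a_{-i})$, shows this equals the given $U_i(a)$ because only agent $i$'s own term drops out of the sum, and then obtains the alignment condition by differencing two such identities. The WLU framing buys a connection to the standard recipe for constructing potential games (and explains in what sense the chosen utility is the agent's ``marginal contribution to coverage''), while your direct verification is shorter and makes explicit that the only fact being used is that each $U_j$ is a function of $a_j$ alone. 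Your closing remark --- that inter-agent coupling lives entirely in the restricted action sets $R_i(a)$ rather than in the payoffs --- is accurate and is a point the paper makes only in passing after Definition~\ref{def:self_reconfiguration}; it is worth keeping.
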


\begin{proof}
Let the agents' utility functions $U_i(a_i, a_{-i}) = \Phi(a_i, a_{-i}) - \Phi(a_i^0, a_{-i})$, where $a_i^0$ denotes the null action of agent $i$, which is equivalent to removing agent $i$ from the environment. Then an agent's utility is its marginal contribution to coverage (Wonderful Life Utility, see \cite{Arslan2007}), or in other words, the grid cells covered exclusively by agent $i$. But since each agent covers exactly the grid cell it currently occupies, the following holds. 
\begin{eqnarray}
  U_i(a_i, a_{-i}) &=& \Phi(a_i, a_{-i}) - \Phi(a_{i}^{0}, a_{-i}) \nonumber \\
		   &=& \sum_{j \in \mathcal{P}} U_j(a) - \sum_{j \in \mathcal{P} \setminus \{i\}} U_j(a) = U_i(a) \nonumber
\end{eqnarray}
Therefore, 
\begin{eqnarray}
  U_i(a_i', a_{-i}) - U_i(a_i, a_{-i}) &=& \Phi(a_i', a_{-i}) - \Phi(a_i, a_{-i}) \nonumber
\end{eqnarray}
\end{proof}

As we will see in Section \ref{sec:decentralizedAlgorithm}, this potential game structure allows us to derive a decentralized version of the presented learning algorithm.

\subsection{Action Set Computation}
\label{sec:actionSetComputation}
A core component of constrained potential games is the computation of restricted action sets. Unlike previous work (see for example \cite{Marden2012} and \cite{Zhu2013}), agents in our setup are constrained not just by their own actions, but also those of others. In this section we present methods for computing restricted action sets such that agents comply with motion constraints as well as constraints imposed by other agents.
%
\paragraph{2D reconfiguration}
In the two-dimensional case agents are restricted to motions on the xy-plane. Unlike in previous work (see \cite{Pickem2012} and \cite{Pickem2013}) where we required a configuration to remain connected at all times, in this work, agents are allowed to disconnect from all (or a subset of) other agents. This approach enables agents to separate from and merge with other agents at a later time. To formalize this idea, we first review some graph theoretic concepts.
\begin{definition}
\label{def:connectivity_graph}
 Let $G = (V, E)$ be the graph composed of $N$ nodes with $V = \{v_1, v_2, \dots, v_N\}$, where node $v_i$ represents agent or location $i$. Then $G$ is called the \textit{connectivity graph of configuration $\mathcal{C}$} if $E = V \times V$ with $e_{ij} \in E$ if $\norm{a_i - a_j} = 1$.
\end{definition}
This definition implies that two nodes $v_i, v_j$ in the connectivity graph are adjacent, if agent or location $i$ and $j$ are located in neighboring grid cells. Note that a connectivity graph can be computed for any set of grid positions, whether these positions are occupied by agents or not. We furthermore use the notions of paths on graphs and graph connectivity in the usual graph theoretic sense. 
Note that $G$ is not necessarily connected as (groups of) agents can split off. Therefore, $G$ generally consists of connected components $C_i$ such that $G = \{C_1, C_2, \dots, C_m\}$. Since the edge set $E$ of $G$ is time-varying, the number $m$ of connected components changes with time as well. Based on the connectivity graph $G$ and the current joint action, we now define the function $R_i: \mathcal{A} \rightarrow 2^{\mathcal{A}_i}$, which maps from the full joint action set to a restricted action set for agent $i$ and is based on the following two definitions of sets of primitive actions.

\begin{definition}
\label{def:motion_set_sliding}
 The set of all currently possible \textit{sliding motions} is $\mathcal{M}_s = \left\{a_i' \in \Z^d \setminus a_{-i}: \norm{m_s}_{L_1} = 1\right\}$, where $m_s = a_i' - a_i$.
\end{definition}

\begin{definition}
\label{def:motion_set_corner}
 The set of all currently possible \textit{corner motions} is $\mathcal{M}_c = \left\{a_i' \in \Z^d \setminus a_{-i}: \norm{m_c}_{L_1} = 2 \; \land \; m_{c,j} \in \{0,1\}\right\}$, where $j \in [1, \dots , d]$ and $m_c = a_i' - a_i$.
\end{definition}
Note that $\mathcal{M}_s$ and $\mathcal{M}_c$ in Def. \ref{def:motion_set_sliding} and Def. \ref{def:motion_set_corner} are equally applicable to 2D and 3D. 
These definitions encode the motion model outlined in Section \ref{sec:motionModel} and allow us to define the restricted action set in two dimensions as follows.
\begin{definition}
\label{def:actionSet2D}
  The two-dimensional restricted action set is given by $R_{i}^{2D}(a) = \mathcal{M}_{s} \cup \mathcal{M}_{c}$.
\end{definition}
This definition ensures that agent $i$ can only move to unoccupied neighboring grid positions $a_i'$ through sliding or corner motions (or stay at its current position $a_i$ - see Algorithm \ref{alg:global} and Algorithm \ref{alg:local}). All other agents replay their current actions $a_{-i}$.
%
%
\paragraph{3D reconfiguration}
Whereas in the 2D case agents were allowed to move to all unoccupied neighboring grid cell regardless of connectivity constraints, in the three-dimensional case we introduce the requirement of \textit{groundedness}. An agent is immobile, if executing an action would remove groundedness from any of its neighbors. Groundedness requires a notion of ground plane, which is defined as follows.

\begin{definition}[Ground Plane]
\label{def:ground_plane}
 The \textit{ground plane} is the set $S_{GP} = \{s \in \mathcal{E}: s_z = 0 \}$ where $\mathcal{E} \subseteq \Z^3$ and the corresponding connectivity graph is $G_{GP} = (V_{GP}, E_{GP})$ with $e_{ij} \in E_{GP}$ if $\norm{s_i - s_j} = 1$.
\end{definition}
Note that the ground plane is defined as the xy-plane and its connectivity graph $G_{GP}$ is, by definition, connected. Positions $s \in S_{GP}$ are not allowed to be occupied by agents, therefore $a_i \in A_i \setminus S_{GP}$ $\forall i \in \mathcal{P}$. Using the graph $G_{GP}$, we define $G' = (V', E')$ as $V' = V \cup V_{GP}$ and $E' = V' \times V'$ such that $e_{ij} \in E'$ if for $v_i, v_j \in V'$ we have $\norm{a_i - a_j}_{L_1} = 1$. Note that $G'$ represents the current configuration including the ground plane, and $a_i$ represents an action of an agent or an unoccupied position in the ground plane.
\begin{definition}[Groundedness]
\label{def:groundedness}
 An agent $i$ is \textit{grounded} if there exists a path on $G'$ from $v_i \in V \subset V'$ to some $v_k \in V_{GP} \subset V'$, where $v_i$ represents agent $i$ in the connectivity graph $G$ (see Def. \ref{def:connectivity_graph}).
 A configuration $\mathcal{C}$ is \textit{grounded} if every agent $i \in \mathcal{P}$ is grounded.
\end{definition}
The idea behind groundedness hints at an embedding of a self-reconfigurable system in the physical world, where agents cannot choose arbitrary positions in the environment (e.g. float in free space). More importantly, we use the notion of groundednes to prove completeness of deterministic reconfiguration in Section \ref{sec:deterministicCompleteness} and irreducibility of the underlying Markov chain in Section \ref{sec:stochasticReconfiguration}.\\
An agent can verify groundedness in a computationally cheap way through a depth-first search, which is complete and guaranteed to terminate in time proportional to $O(N)$ in a finite space. The notion of groundedness also informs the restricted action set computation. If all neighbors $\mathcal{N}_{i} = \{v_j \in V: e_{ij} \in E\}$ (adjacency according to $G$ in Def. \ref{def:connectivity_graph}) of agent $i$ can compute an alternate path to ground (other than through agent $i$) then agent $i$ is allowed to move. 
To formalize this idea, let $G_{-i} = (V_{-i}, E_{-i})$ with $V_{-i} = V \cup V_{GP} \setminus \{v_i\}$ and $E_{-i} = V_{-i} \times V_{-i}$ such that $e_{ij} \in E_{-i}$ if for $v_i, v_j \in V_{-i}$ we have $\norm{a_i - a_j}_{L_1} = 1$. $G_{-i}$ is therefore the connectivity graph of the current configuration including the ground plane without agent $i$. Subsequently, $R_{i}^{3D}(a)$ is defined as follows. 
\begin{definition}
\label{def:actionSet3D}
 The three-dimensional restricted action set $R_{i}^{3D}(a) = \mathcal{M}_{s} \cup \mathcal{M}_{c}$ if all agents $v_j \in \mathcal{N}_i$ are grounded on $G_{-i}$. Otherwise, $R_{i}^{3D}(a) = \{a_i\}$.
\end{definition}
This definition encodes the same criteria as the two-dimensional action set with the additional constraint of maintaining groundedness (see Fig. \ref{fig:groundedness}). If agent $i$ executing an action would leave any of its neighbors ungrounded, agent $i$ is not allowed to move. 

\begin{figure}
\centering
\subfloat[A movement of agent $c_1$ would remove groundedness of the agent $c_2$.]{
    \begin{tikzpicture}[baseline,scale=0.1]
      \node[anchor=south west, inner sep=0] (image) at (0,0){\includegraphics[width=0.18\textwidth]{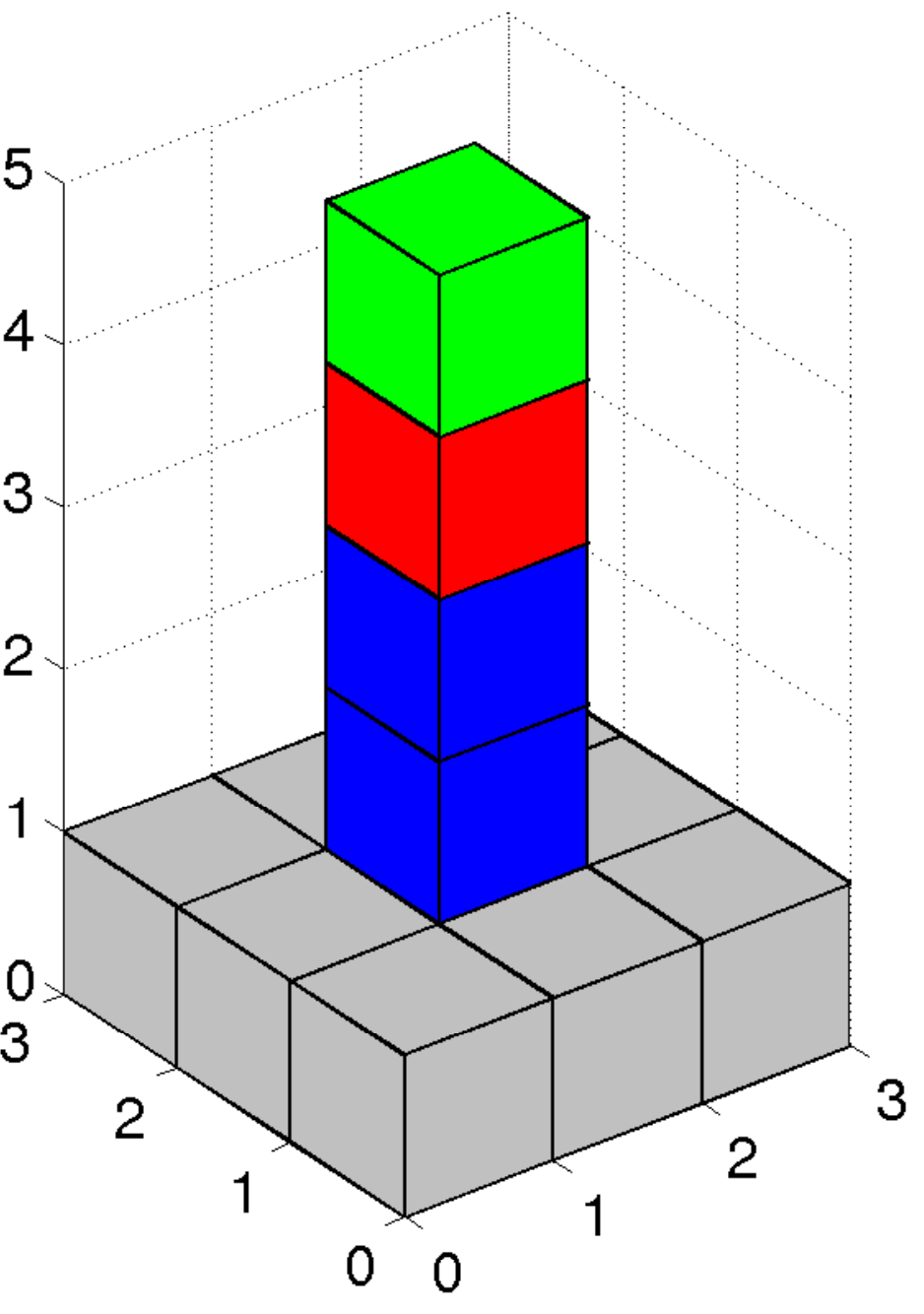}\label{fig:groundedness_1}};
      \begin{scope}[x={(image.south east)},y={(image.north west)}]
	\node[anchor=west, thick] at (0.44,0.97){$\mathcal{C}$};
	\draw[red,ultra thick,rounded corners] (0.31,0.28) rectangle (0.68,0.92);
	\node[anchor=west] (ground) at (-0.2,0.15){$S_{GP}$};
	\node[anchor=west] (a) at (0.26,0.23){};
	\node[anchor=west] (c1) at (0.48,0.63){};
	\node[anchor=west] (c2) at (0.48,0.73){};
	\node[anchor=west] (c1Label) at (0.8,0.7){$c_2$};
	\node[anchor=west] (c2Label) at (0.8,0.8){$c_1$};
	\draw[->, thick] (ground) to (a);
	\draw[->, thick] (c1Label) to (c1);
	\draw[->, thick] (c2Label) to (c2);
      \end{scope}
    \end{tikzpicture}
}\hfill{}
\subfloat[Agenet $c_1$ can move without breaking the groundedness constraint for the agents $c_2$ and $c_3$.]{
    \begin{tikzpicture}[baseline,scale=0.1]
      \node[anchor=south west, inner sep=0] at (0,0){\includegraphics[width=0.22\textwidth]{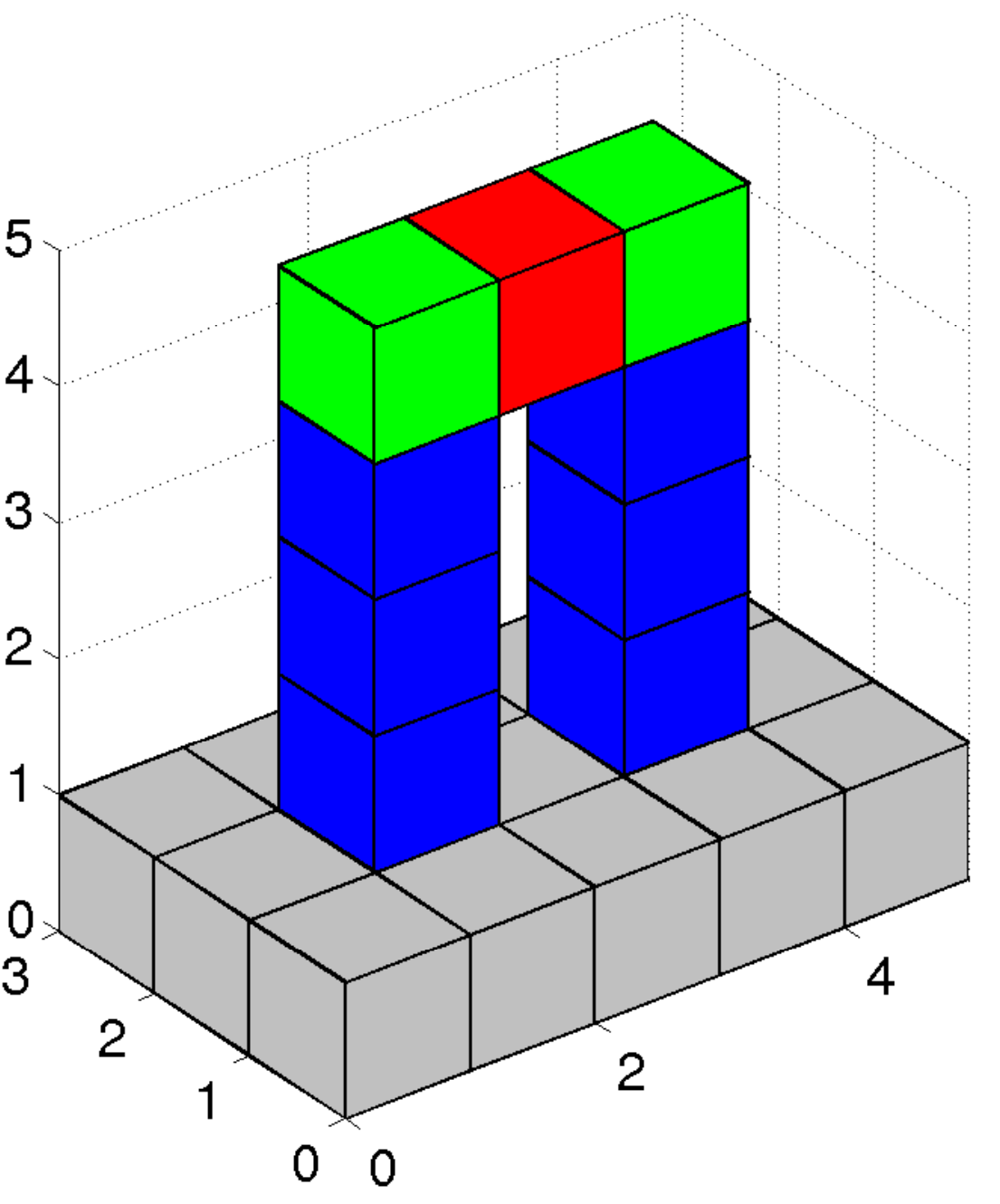}\label{fig:groundedness_2}};
      \begin{scope}[x={(image.south east)},y={(image.north west)}]
	\node[anchor=west, thick] at (0.55,1){$\mathcal{C}$};
	\draw[red,ultra thick,rounded corners] (0.31,0.28) rectangle (0.95,0.95);
	\node[anchor=west] (ground) at (0.0,0.0){$S_{GP}$};
	\node[anchor=west] (a) at (0.26,0.23){};
	\node[anchor=west] (c2) at (0.76,0.79){};
	\node[anchor=west] (c1) at (0.62 ,0.76){};
	\node[anchor=west] (c3) at (0.46,0.71){};
	\node[anchor=west] (c2Label) at (1,0.8){$c_2$};
	\node[anchor=west] (c1Label) at (1,0.7){$c_1$};
	\node[anchor=west] (c3Label) at (1,0.6){$c_3$};
	\draw[->, thick] (ground) to (a);
	\draw[->, thick] (c1Label) to (c1);
	\draw[->, thick] (c2Label) to (c2);
	\draw[->, thick] (c3Label) to (c3);
      \end{scope}
    \end{tikzpicture}
}
\caption[Caption]{Examples of grounded configurations and feasible motions of cubes.}
\label{fig:groundedness}
\end{figure}
%
\section{Deterministic Completeness}
\label{sec:deterministicCompleteness}
In this section we establish completeness of deterministic reconfiguration in two and three dimensions. We will show that for any two configurations $\mathcal{C}_I$ and $\mathcal{C}_T$ there exists a deterministically determined sequence of individual agent actions such that configuration $\mathcal{C}_I$ will be reconfigured into $\mathcal{C}_T$. These results are required to show irreducibility of the Markov chain induced by the learning algorithm outlined in Section \ref{sec:stochasticReconfiguration}. Irreducibility guarantees the existence of a unique stationary distribution and furthermore a unique potential function maximizer. We first show completeness of 2D reconfiguration.

\begin{theorem}[Completeness of 2D reconfiguration]
\label{theorem:completeness_2D}
 Any given two-dimensional configuration $\mathcal{C}_{I}$ can be reconfigured into any other two-dimensional configuration $\mathcal{C}_{T}$, i.e. there exists a finite sequence of configurations $\{\mathcal{C}_I = \mathcal{C}_0, \mathcal{C}_1, \dots, \mathcal{C}_M = \mathcal{C}_T \}$ such that two consecutive configurations differ only in one individual agent motion.
\end{theorem}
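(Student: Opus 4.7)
The plan is to reduce the problem to a canonical configuration and invoke the reversibility of the primitive motions. I fix the canonical line $L_N = \{(1,0), (2,0), \dots, (N,0)\}$ and aim to prove the auxiliary claim that every $N$-agent configuration $\mathcal{C}$ admits a finite sequence of legal single-agent moves ending in $L_N$. Each sliding or corner move is self-evidently reversible: it is legal iff its destination cell is unoccupied, so in the resulting configuration the reverse move back to the now-vacant starting cell is of the same type and equally legal. Hence the auxiliary reduction $\mathcal{C}_T \to L_N$ can be played in reverse to yield a legal sequence $L_N \to \mathcal{C}_T$, and concatenating this after $\mathcal{C}_I \to L_N$ produces the required reconfiguration.

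To carry out the auxiliary reduction I would pick an altitude $M$ strictly greater than every $y$-coordinate appearing in $\mathcal{C}$ and a horizontal offset $K$ strictly greater than both $N$ and every $x$-coordinate in $\mathcal{C}$, then proceed in two phases. In the \emph{parking phase} I iterate $k = 1, \dots, N$; at each step I select the topmost--rightmost agent not yet parked, slide it straight up its current column to altitude $M+1$, horizontally at that altitude to column $K+k$, and one step down into the parking slot $(K+k, M)$. In the \emph{unloading phase} I iterate $k = 1, \dots, N$ again and route the agent currently at $(K+k, M)$ via the symmetric three-segment slide: up one step to altitude $M+1$, horizontally to column $k$, and straight down column $k$ into $(k, 0)$. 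After both phases the configuration equals $L_N$, completing the auxiliary reduction.

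The main obstacle this construction is designed to circumvent is the possibility that some agent is completely surrounded and has no legal move at all (e.g., the center cell of a $3 \times 3$ block). The topmost--rightmost selection rule eliminates this concern for the agent currently being moved: by choice, no other agent sits at a strictly larger $y$-coordinate in the working configuration, so its upward neighbor is vacant and the first slide is always legal. Everything else reduces to geometric bookkeeping --- checking that each of the six straight-line segments traverses only cells unoccupied at the moment of traversal --- which follows from the stratification of the plane into the source region ($y \le \max_i y_i$), the parking row ($y = M$), and the target row ($y = 0$), together with the column separation between source, parking slots, and $L_N$ guaranteed by the choice of $K$. The ample empty space available because $\mathcal{E}$ is assumed unbounded (or simply large enough to contain the construction) is what makes all six segments simultaneously legal.
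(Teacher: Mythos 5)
Your proof is correct, but it takes a genuinely different route from the paper's. The paper assumes (without loss of generality) that $\mathcal{C}_I$ and $\mathcal{C}_T$ are disjoint and separated along one axis, then greedily pairs the agent closest to an unoccupied target cell with that cell and appeals to the completeness of A* for single-agent path planning on the grid; it does not explicitly rule out the selected agent being walled in by the other (static) agents, which is exactly the failure mode your topmost--rightmost selection rule is designed to exclude. You instead reduce both $\mathcal{C}_I$ and $\mathcal{C}_T$ to a canonical line $L_N$ and use reversibility of the primitive moves to concatenate $\mathcal{C}_I \to L_N$ with the reversal of $\mathcal{C}_T \to L_N$. This buys a fully explicit move sequence, removes the need for the separation assumption, and makes the ``no agent is ever stuck'' argument airtight, at the cost of a longer (still finite) sequence and the same implicit assumption both proofs make, namely that the grid is unbounded or large enough to contain the staging rows and parking columns. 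One small caveat: your blanket claim that corner moves are reversible sits uneasily with the paper's Definition 5, which literally restricts corner displacements to $m_{c,j} \in \{0,1\}$ (nonnegative components only); since your construction uses only sliding moves, for which $\norm{m_s}_{L_1}=1$ admits all four directions, this does not affect the validity of your argument.
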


\begin{proof}
 Without loss of generality, assume that $\mathcal{C}_I$ and $\mathcal{C}_T$ do not overlap, i.e. for no $c_i \in \mathcal{C}_I$ is it true that also $c_i \in \mathcal{C}_T$. Additionally, assume that $\mathcal{C}_I$ and $\mathcal{C}_I$ are separated along one dimension $k \in \{e_x, e_y, e_z\}$ (with $e_x, e_y, e_z$ being the basis vectors of the lattice), i.e. $\forall c_{i, k} \in \mathcal{C}_I$ we have that $c_{i, k} < c_{j, k}, \; \forall c_j \in \mathcal{C}_T$.
 Then at each time step $t$, select the agent $i$ whose current position $c_i \in \mathcal{C}$ is closest to an unoccupied position $c_j \in \mathcal{C}_T$. Plan a deterministic path of primitive agent motions $p_i = \{c_i = c_i^0 \rightarrow c_i^1 \rightarrow \dots \rightarrow c_i^m = c_{t_i}\}$ using a complete path planner such as A*. Note that such a path always exists since we don't require agents to remain connected to any other agents. Therefore, the path planning problem is reduced to single agent path planning on a discrete finite grid, which is complete because A* is complete. This greedy selection process of the agent-target pairs together with a complete path planning approach suffices to reconfigure any two-dimensional configuration into any other two-dimensional configuration (similar to flood-fill algorithms).
\end{proof}
The result in Theorem \ref{theorem:completeness_2D} holds for any configuration, even configurations that consist of multiple connected components. Before we can show a similar result for the 3D case we need to introduce a graph theoretic result. 

\begin{lemma}
\label{lemma:articulation_point}
 According to Lemma 6 in \cite{Rus2001}, any finite graph with at least two vertices contains at least two vertices which are not articulation points.\footnote{An articulation point is a vertex in a graph whose removal would disconnect the graph.}
\end{lemma}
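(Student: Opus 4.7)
The plan is to reduce the statement to the connected case and then exploit a spanning tree argument. If the finite graph $G$ has a connected component consisting of a single vertex, then that vertex is trivially not an articulation point (removing it does not increase the number of connected components). So the interesting case is when $G$ has a component on at least two vertices; it suffices to exhibit two non-articulation points inside any such component, since the remaining vertices of $G$ are unaffected by their removal. Thus I would assume without loss of generality that $G$ itself is connected with $|V(G)| \ge 2$.

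Next, I would pick any spanning tree $T$ of $G$ and reduce the problem to the well-known fact that a tree with at least two vertices has at least two leaves. That fact I would prove quickly by considering a longest path $v_0, v_1, \ldots, v_k$ in $T$: both endpoints $v_0$ and $v_k$ must have degree one in $T$, since any additional neighbor of $v_0$ or $v_k$ not on the path would produce a strictly longer path (using that $T$ is acyclic so no such neighbor can revisit a path vertex), contradicting maximality. Hence $T$ contains at least two leaves $u, w$.

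The final step is to show that every leaf of $T$ is a non-articulation point of $G$. Let $u$ be a leaf of $T$, and consider any two vertices $x, y \in V(G) \setminus \{u\}$. Then $T \setminus \{u\}$ is still a tree (removing a leaf from a tree preserves connectivity), so there is a path from $x$ to $y$ in $T \setminus \{u\}$, which is also a path in $G \setminus \{u\}$. Therefore $G \setminus \{u\}$ is connected, so $u$ is not an articulation point of $G$. Applying this to both leaves $u$ and $w$ gives two distinct non-articulation points of $G$, completing the argument.

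The only subtle point is the last step: a leaf of an arbitrary spanning tree is automatically a non-articulation point of $G$, even though articulation of $G$ depends on all of $G$'s edges and not just the tree edges. The argument above handles this for free because spanning tree paths are $G$-paths. Everything else is routine; the main conceptual obstacle, choosing the right structural witness, is resolved by the spanning tree plus longest-path trick.
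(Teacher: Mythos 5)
Your proof is correct, but it is worth noting that the paper does not actually prove this statement at all: Lemma~\ref{lemma:articulation_point} is stated purely by citation to Lemma~6 of \cite{Rus2001}, and no argument is given in the text. What you have supplied is a self-contained, elementary proof via the standard route --- reduce to a connected component, take a spanning tree, extract two leaves from the endpoints of a longest path, and observe that deleting a leaf of a spanning tree cannot disconnect the host graph because spanning-tree paths are paths in $G$. All the steps check out: the reduction to the connected case is handled correctly (singleton components are trivially non-articulation points, and any component with at least two vertices contributes two, so every case of a graph on at least two vertices is covered); the longest-path argument correctly uses acyclicity to rule out a second tree-neighbor of an endpoint; and the final step correctly notes that $T \setminus \{u\}$ remains a connected spanning subgraph of $G \setminus \{u\}$. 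The one thing your write-up buys over the paper's citation is that it makes the lemma independent of \cite{Rus2001} and makes explicit how the disconnected case is handled, which matters here since the configurations in this paper are explicitly allowed to split into multiple connected components; the cost is only a few lines. This is exactly the kind of folklore fact (every nontrivial connected graph has at least two non-cut vertices) that the cited source itself ultimately rests on, so your argument and the paper's citation are consistent with one another.
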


\begin{theorem}[Completeness of 3D to 2D reconfiguration]
\label{theorem:intermediate_config}
 Any finite grounded 3D configuration $\mathcal{C}^{G, 3D}$ can be reconfigured into a 2D configuration $\mathcal{C}_{Int}^{2D}$, i.e. there exists a finite sequence of configurations $\{ \mathcal{C}^{G,3D} = \mathcal{C}_0, \mathcal{C}_1, \dots, \mathcal{C}_M = \mathcal{C}_{Int}^{2D} \}$ such that two consecutive configurations differ only in one individual agent motion.
\end{theorem}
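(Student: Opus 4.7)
My approach is induction on the number of agents with $z$-coordinate strictly above $1$, the lowest allowed level above the ground plane. If this count is zero, the configuration already lies in a single horizontal layer and can be returned as $\mathcal{C}_{Int}^{2D}$, so the real work lies in the inductive step: given a grounded 3D configuration $\mathcal{C}$ with at least one agent above $z=1$, I peel off one such agent and land it on the $z=1$ layer outside the current footprint, producing a new grounded configuration with one fewer agent above $z=1$.

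To identify which agent to peel off, I restrict attention to the top layer, the set $T$ of agents at height $z_{max} = \max_{c_i \in \mathcal{C}} c_{i,z}$. If $|T|\geq 2$, Lemma \ref{lemma:articulation_point} applied to the subgraph of $G$ induced on $T$ produces a non-articulation point in $T$; if $|T|=1$, that single agent is trivially non-articulating in $T$. Call the chosen agent $c^{\ast}$. The key observation is that removing $c^{\ast}$ preserves groundedness of every other agent: the neighbors of $c^{\ast}$ sit at heights $z_{max}$ or $z_{max}-1$, and because $c^{\ast}$ is non-articulating within $T$ its top-layer neighbors remain mutually connected, while any neighbor at $z_{max}-1$ already has a ground-reaching path not using $c^{\ast}$, since $c^{\ast}$ has no neighbors at height $>z_{max}$ from which a detour through it could arise. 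A short case analysis on $G_{-c^{\ast}}$ closes this step.

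Next, I construct an explicit sequence of sliding and corner motions that carries $c^{\ast}$ from its position at height $z_{max}$ down to an unoccupied cell at $z=1$ adjacent to the current footprint. The descent is built greedily: at each step I choose a primitive motion (sliding or corner, per Def.~\ref{def:actionSet3D}) that strictly decreases $c^{\ast}$'s $z$-coordinate while keeping $c^{\ast}$ in contact with another agent or the ground plane; when a downward primitive is locally blocked, $c^{\ast}$ first slides horizontally along the surface of the residual configuration until a downward move becomes available. At each intermediate step only $c^{\ast}$ moves, the remainder of the configuration is unchanged and was already shown to be grounded once $c^{\ast}$ was removed, and $c^{\ast}$ itself stays grounded because it is always adjacent to that residual structure; hence every intermediate transition is legal under Def.~\ref{def:actionSet3D}.

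The main obstacle I expect is rigorously justifying the existence of such a descent path in the presence of arbitrary 3D obstructions: local blockages could in principle force $c^{\ast}$ to traverse a long horizontal detour, and one must show that such a detour exists and eventually terminates at a valid landing site horizontally adjacent to the $z=1$ layer. I would handle this by arguing that the outer surface of the residual (finite, grounded) configuration is always connected to the $z=1$ layer by a sequence of sliding and corner moves, and that the union of the $z=1$ layer and the ground plane always provides an unoccupied grid cell adjacent to the current footprint where $c^{\ast}$ can finally come to rest, extending $\mathcal{C}_{Int}^{2D}$ by one agent. Concatenating the motion sequences produced by the inductive hypothesis then yields the required finite chain $\{\mathcal{C}^{G,3D}=\mathcal{C}_0,\mathcal{C}_1,\dots,\mathcal{C}_M=\mathcal{C}_{Int}^{2D}\}$ of single-agent transitions.
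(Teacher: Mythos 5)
There is a genuine gap in your inductive step: restricting the articulation-point argument to the top layer $T$ does not identify an agent that is safe to remove. Being a non-articulation point of the subgraph induced on $T$ says nothing about whether other agents retain a path to the ground plane once that agent is deleted, and your supporting claim --- that any neighbor at height $z_{max}-1$ already has a ground-reaching path avoiding $c^{\ast}$ because $c^{\ast}$ has no neighbors above it --- is false: such a path may go \emph{up} through $c^{\ast}$, laterally across the top layer, and down elsewhere. Concretely, take the grounded, connected configuration consisting of a tower at $(0,0,z)$ for $z=1,\dots,5$, a bridge agent at $(1,0,5)$, and a hanging column at $(2,0,5),(2,0,4),(2,0,3)$. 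The top layer induces a path on three vertices whose non-articulation points are $(0,0,5)$ and $(2,0,5)$; removing either one leaves the hanging column (or the bridge plus the column) with no path to $S_{GP}$, so by Def.~\ref{def:actionSet3D} neither agent may move. The agent that must move first here is $(2,0,3)$, which is not in the top layer at all, so your peeling scheme stalls on this input.

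The paper avoids this by applying Lemma~\ref{lemma:articulation_point} not to the top layer but to the graph $G''$ obtained from \emph{all} agents above the bottom layer together with a single contracted vertex $v^{GP}$ representing the grounded base: a non-articulation point of $G''$ is, by construction, exactly an agent whose removal keeps every remaining agent connected to ground, and since $G''$ has at least two non-articulation points and at most one of them is $v^{GP}$, a movable agent always exists (in the example above, this correctly selects $(2,0,3)$). If you replace your top-layer selection with this contracted-graph selection, the rest of your induction goes through in the same spirit as the paper's iteration. Your remaining concern --- rigorously exhibiting the descent path of sliding and corner motions to an empty cell at $z=1$ --- is real, but the paper is no more detailed on that point than you are, so it is the agent-selection step that is the substantive defect.
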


\begin{proof}
 Without loss of generality, assume that the connectivity graph of $\mathcal{C}^{G, 3D}$ consists of one connected component. In any finite grounded 3D configuration, there always exists an agent $i \in \mathcal{P}$ with a non-empty restricted action set $|R_i(a)| > 0$.. Agent $i$ is therefore mobile and there exists a finite path of individual agent motions $p_i = \{a_i = a_i^0, a_i^1, \dots, a_i^m\}$ such that for some $s \in S_{GP}$ $\norm{a_i^m, s}_{L_1} = 1$, i.e. the agent's final action is a position on the ground plane $S_{GP}$.
 
 Let the subset of agents $\mathcal{P}_{z>1} \subset \mathcal{P}$ contain those agents whose positions are not on the ground plane and $G_{z>1} = (V_{z>1}, E_{z>1})$ the corresponding connectivity graph. Furthermore, let $G'' = (V'', E'')$ be such that $V'' = V_{z>1} \cup \{v^{GP}\}$, where $v^{GP}$ is a single node representing all the agents on the ground plane and $E''$ such that $e_{ij} \in E''$ if for $v_i, v_j \in V''$ we have $\norm{a_i - a_j}_{L_1} = 1$.
 According to Def. \ref{def:actionSet3D}, an agent $i$ is mobile if it is not an articulation point in the connectivity graph $G''$ (see Fig. \ref{fig:proof_3D_to_2D}). 
 In $G''$, $v^{GP}$ may or may not be a non-articulation points, but according to Lemma \ref{lemma:articulation_point}, in every connected graph there always exist at least two non-articulation points. Therefore, there exists at least one agent $i$ that has a non-empty restricted action set. For that agent, one can compute a deterministic action sequence that moves agent $i$ to the ground plane.
 In other words, at each iteration $t$ we transfer one vertex from $V_{z>1}$ to $v^{GP}$ such that $|V_{z>1}^t| = |V_{z>1}^{t-1}| - 1$ until $|V_{z>1}| = 0$ and all agents have been moved to $v^{GP}$. This process terminates in a finite number of time steps because the initial configuration $\mathcal{C}^{G, 3D}$ is finite. The result is a 2D configuration $\mathcal{C}_{Int}^{2D}$ representing $G''$.
\end{proof}

\begin{figure}
  \begin{center}
    \input{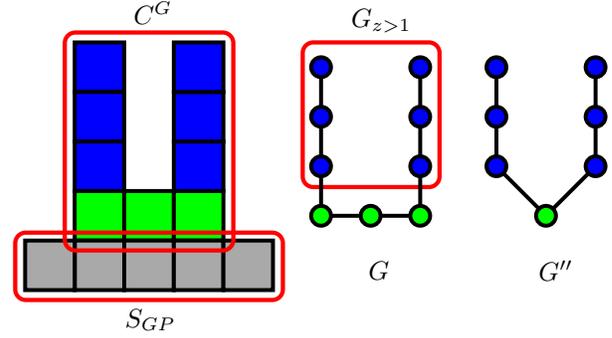} 
    \caption{Example of a grounded configuration $\mathcal{C}^{G}$, the ground plane $S_{GP}$, associated connectivity graph $G$. $G_{z>1}$ represents all agents not on the ground plane, while all agents on the ground plane are represented by a single node in $G''$}
    \label{fig:proof_3D_to_2D}
  \end{center}
\end{figure}

\begin{corollary}
\label{cor:deterministicCompletenes3D}
 Any finite grounded 3D configuration $\mathcal{C}^{G, 3D}_I$ can be reconfigured into any other finite grounded 3D configuration $\mathcal{C}^{G, 3D}_{T}$.
\end{corollary}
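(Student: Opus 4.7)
The plan is to chain together Theorem \ref{theorem:intermediate_config} (applied twice) with Theorem \ref{theorem:completeness_2D} by exploiting the reversibility of the primitive motions. Specifically, I would construct the reconfiguration sequence in three stages: first flatten $\mathcal{C}^{G,3D}_I$ onto the ground plane to obtain a 2D configuration $\mathcal{C}_{Int,I}^{2D}$, then reconfigure this 2D configuration into a second 2D configuration $\mathcal{C}_{Int,T}^{2D}$, and finally ``lift'' $\mathcal{C}_{Int,T}^{2D}$ to $\mathcal{C}^{G,3D}_T$.

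First I would invoke Theorem \ref{theorem:intermediate_config} applied to $\mathcal{C}^{G,3D}_I$ to obtain a finite sequence of single-agent motions yielding a 2D configuration $\mathcal{C}_{Int,I}^{2D}$ entirely on the ground plane. I would then apply Theorem \ref{theorem:intermediate_config} to $\mathcal{C}^{G,3D}_T$ to obtain a finite sequence $\sigma_T = \{\mathcal{C}^{G,3D}_T = \mathcal{D}_0, \mathcal{D}_1, \dots, \mathcal{D}_L = \mathcal{C}_{Int,T}^{2D}\}$ flattening the target into a 2D configuration $\mathcal{C}_{Int,T}^{2D}$. The key observation is that each primitive sliding or corner motion $m \in \{m_s, m_c\}$ is reversible: if an agent can transition from $a_i$ to $a_i' = a_i + m$ under the motion model and groundedness constraint, then the reverse motion $-m$ is also a valid primitive motion, and the reversed configuration sequence $\sigma_T^{-1} = \{\mathcal{C}_{Int,T}^{2D} = \mathcal{D}_L, \mathcal{D}_{L-1}, \dots, \mathcal{D}_0 = \mathcal{C}^{G,3D}_T\}$ consists of valid transitions because every intermediate configuration $\mathcal{D}_k$ in the forward sequence is, by construction, grounded and thus remains admissible when traversed in reverse.

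Next, since both $\mathcal{C}_{Int,I}^{2D}$ and $\mathcal{C}_{Int,T}^{2D}$ are 2D configurations (that is, both lie in a single plane parallel to the ground plane, specifically on the layer $z=1$ above $S_{GP}$), I would apply Theorem \ref{theorem:completeness_2D} to produce a finite sequence of primitive 2D motions transforming $\mathcal{C}_{Int,I}^{2D}$ into $\mathcal{C}_{Int,T}^{2D}$. These planar motions are trivially grounded because every agent remains adjacent to the ground plane throughout. Concatenating the three sequences yields a finite, admissible sequence of primitive single-agent motions from $\mathcal{C}^{G,3D}_I$ to $\mathcal{C}^{G,3D}_T$.

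The main obstacle is justifying reversibility of $\sigma_T$ under the 3D restricted action set of Def.~\ref{def:actionSet3D}. This requires checking that the groundedness predicate is symmetric with respect to the direction of an individual motion: if $\mathcal{D}_k \to \mathcal{D}_{k+1}$ is admissible (all neighbors of the moving agent in $\mathcal{D}_k$ remain grounded after the move), then $\mathcal{D}_{k+1} \to \mathcal{D}_k$ is admissible as well. This follows because both $\mathcal{D}_k$ and $\mathcal{D}_{k+1}$ are fully grounded configurations and the action set depends only on the current joint action, not on the history; hence every transition in $\sigma_T^{-1}$ satisfies the conditions of Def.~\ref{def:actionSet3D}. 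A minor additional detail is to treat the case where $\mathcal{C}_{Int,I}^{2D}$ and $\mathcal{C}_{Int,T}^{2D}$ may not be disjoint in the sense required by Theorem \ref{theorem:completeness_2D}, which can be handled by first translating one of them along the ground plane (itself a 2D reconfiguration) to ensure the separation hypothesis of that theorem.
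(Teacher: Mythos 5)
Your proof follows essentially the same route as the paper's: flatten $\mathcal{C}^{G,3D}_I$ to the ground layer via Theorem~\ref{theorem:intermediate_config}, bridge the two planar configurations via Theorem~\ref{theorem:completeness_2D}, and traverse the flattening of $\mathcal{C}^{G,3D}_T$ in reverse. You are in fact more careful than the paper, which simply asserts that ``the reverse is also true''; your explicit attention to the admissibility of the reversed motions under Def.~\ref{def:actionSet3D} and to the disjointness hypothesis of Theorem~\ref{theorem:completeness_2D} fills in details the paper leaves implicit (though the full justification of reversibility needs slightly more than both endpoints being grounded, namely that forward admissibility already guarantees every other agent is grounded on $G_{-i}$).
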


\begin{proof}
 Since, according to Theorem \ref{theorem:intermediate_config}, any finite grounded 3D configuration $\mathcal{C}^{G, 3D}_I$ can be reduced to an intermediate 2D configuration $\mathcal{C}_{Int}^{2D}$ in a finite number of steps, the reverse is also true - any finite grounded 3D configuration $\mathcal{C}^{G, 3D}_{T}$ can be assembled from some 2D configuration $\mathcal{C}_{Int}^{2D'}$ in a finite number of steps.
 According to Theorem \ref{theorem:completeness_2D}, any 2D configuration $\mathcal{C}_{Int}^{2D}$ can be reconfigured into any other 2D configuration $\mathcal{C}_{Int}^{2D'}$. Therefore, there exists a deterministic finite action sequence from $\mathcal{C}_{I}^{G, 3D}$ to $\mathcal{C}^{G, 3D}_{T}$.
\end{proof}

\section{Stochastic Reconfiguration}
\label{sec:stochasticReconfiguration}
In this and the following section we present a stochastic reconfiguration algorithm that is fully distributed, does not require any precomputation of paths or actions, and can adapt to changing environment conditions. Unlike log-linear learning (\cite{Blume1993}), which cannot handle restricted action sets, and variants such as binary log-linear learning (\cite{Arslan2007}, \cite{Lim2011}, \cite{Marden2012}), which can only handle action sets constrained by an agent's own previous action, the presented algorithm guarantees convergence to the potential function maximizer even if action sets are constrained by all agents' actions.

Our algorithm is based on the Metropolis-Hastings algorithm (\cite{Metropolis1953},\cite{Hastings1970}), which allows the design of transition probabilities such that the stationary distribution of the underlying Markov chain is a desired target distribution, which we choose to be the Gibbs distribution. This choice enables a distributed implementation of the learning rule in Theorem \ref{theorem:transition_probabilities} through the potential game formalism (see Corollary \ref{cor:decentralizedTransitionProbabilities}). The Metropolis-Hastings algorithm guarantees two results: existence and uniqueness of a stationary distribution. We will use these properties to show that the only stochastically stable state is x*, the potential function maximizer.
 
\begin{theorem}
\label{theorem:transition_probabilities}
 Given any two states $x_i$ and $x_j$ representing global configurations, the transition probabilities 
 $$p_{ij} =  \
 \begin{cases} 
  q_{ji}e^{\frac{1}{\tau}(\Phi(x_j) - \Phi(x_i))} & \mathrm{if} \; e^{\frac{1}{\tau}(\Phi(x_j) - \Phi(x_i))}\frac{q_{ji}}{q_{ij}} \leq 1 \\
  q_{ij} & \mathrm{o.w.}
 \end{cases}$$
 guarantee that the unique stationary distribution of the underlying Markov chain is a Gibbs distribution of the form $Pr[X=x] = \frac{e^{\frac{1}{\tau}\Phi(x)}}{\sum_{x' \in \mathcal{X}} e^{\frac{1}{\tau}\Phi(x')}}$. 
\end{theorem}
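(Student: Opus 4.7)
The plan is to invoke the classical Metropolis-Hastings detailed-balance argument and combine it with the irreducibility already established in Section~\ref{sec:deterministicCompleteness}. Writing $\pi(x) = e^{\Phi(x)/\tau}/Z$ with $Z = \sum_{x' \in \mathcal{X}} e^{\Phi(x')/\tau}$ for the candidate Gibbs distribution, I would first verify the detailed-balance identity $\pi(x_i)\, p_{ij} = \pi(x_j)\, p_{ji}$ for every ordered pair of global configurations, and then argue that a finite irreducible Markov chain admits a unique stationary distribution, which must therefore coincide with $\pi$.

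For detailed balance, fix $x_i \neq x_j$ and suppose without loss of generality that $e^{(\Phi(x_j)-\Phi(x_i))/\tau}\, q_{ji}/q_{ij} \leq 1$, so the theorem gives $p_{ij} = q_{ji}\, e^{(\Phi(x_j)-\Phi(x_i))/\tau}$. By inverting this inequality, the reverse pair automatically satisfies $e^{(\Phi(x_i)-\Phi(x_j))/\tau}\, q_{ij}/q_{ji} \geq 1$, which places the pair $(j,i)$ in the ``otherwise'' branch, yielding $p_{ji} = q_{ji}$. A one-line computation then shows that both $\pi(x_i)\, p_{ij}$ and $\pi(x_j)\, p_{ji}$ equal $q_{ji}\, e^{\Phi(x_j)/\tau}/Z$. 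Self-loops $p_{ii}$ satisfy the identity trivially, and the symmetric case $e^{(\Phi(x_j)-\Phi(x_i))/\tau}\, q_{ji}/q_{ij} = 1$ is consistent across both branches.

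For uniqueness, I would combine Theorem~\ref{theorem:completeness_2D} and Corollary~\ref{cor:deterministicCompletenes3D}: together they assert that any configuration is deterministically reachable from any other through a finite sequence of primitive moves. Each such move receives positive proposal probability under the Metropolis-Hastings construction and strictly positive acceptance probability, since the Boltzmann factor never vanishes for $\tau > 0$. Hence the induced Markov chain on the finite state space $\mathcal{X}$ of reachable configurations is irreducible, and standard finite-state Markov chain theory (Perron-Frobenius) delivers uniqueness of the stationary distribution. Combined with the detailed-balance step, $\pi$ is the unique stationary distribution, completing the proof.

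I expect the main obstacle to be the case analysis when the proposal kernel $q$ is asymmetric, which is characteristic of our setting because the restricted action sets $R_i$ depend on the full joint configuration and therefore the set of legal one-step neighbors of $x_i$ need not coincide with the set of states from which $x_i$ is a legal successor. One must handle carefully the degenerate cases where $q_{ij}$ or $q_{ji}$ vanishes, ensuring that the ratio in the acceptance rule is well-defined and that detailed balance still holds trivially ($0 = 0$). Related bookkeeping will be needed to align the construction of $q$ with the decentralization asserted in Corollary~\ref{cor:decentralizedTransitionProbabilities}, but that concern lies outside the scope of this theorem.
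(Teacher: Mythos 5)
Your proposal is correct and follows essentially the same route as the paper: both verify detailed balance for the Metropolis--Hastings construction with proposal $q_{ij} = 1/\lvert R_k\rvert$ and Metropolis acceptance, then obtain uniqueness from irreducibility via Theorem~\ref{theorem:completeness_2D} and Corollary~\ref{cor:deterministicCompletenes3D} together with the strict positivity of the proposal and acceptance probabilities. Your explicit two-branch case analysis for detailed balance and your remark about degenerate $q_{ij}=0$ pairs are merely a more careful writing-out of what the paper dismisses as ``easily verified.''
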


\begin{proof}
 Let $\mathcal{X}$ be a finite state space containing all possible states of configurations composed of $N$ agents.\footnote{Such a space is finite if we assume translation and rotation invariance of the configuration as well as a finite environment.} On that state space, let the desired target distribution be $\pi(x) = Pr[X=x] = \frac{e^{\frac{1}{\tau}\Phi(x)}}{\sum_{x' \in \mathcal{X}} e^{\frac{1}{\tau}\Phi(x')}}$ with $\Phi$ defined in Def. \ref{def:self_reconfiguration}. By applying the Metropolis-Hastings algorithm, we can compute transition probabilities $P = \{p_{ij}\}$ such that $\pi$ is the stationary distribution of $P$, i.e. $\pi = \pi P$.
 
 In the Metropolis-Hastings algorithm, a transition probability is represented as $p_{ij} = g(x_i \rightarrow x_j)\alpha(x_i \rightarrow x_j)$, where $g(x_i \rightarrow x_j)$ is the proposal distribution and $\alpha(x_i \rightarrow x_j)$ is the acceptance distribution. Both are conditional probabilities of proposing/accepting a state $x_j$ given that the current state is $x_i$.
 
 Let agent $k$ achieve the transition from state $x_i$ to $x_j$ through action $a_k \in \mathcal{R}_k(a)$. Then one possible choice for the proposal distribution is $g(x_i \rightarrow x_j) = q_{ij} = \frac{1}{|R_k|}, \; \forall j \in \{1,\dots,|R_k|\}$, i.e. a random choice among all available actions of agent $k$. 
 According to Hastings (\cite{Hastings1970}), a popular choice for the acceptance distribution is the Metropolis choice $\alpha_{ij} = \min\left\{1, \frac{\pi_j q_{ji}}{\pi_i q_{ij}} \right\}$. Note that unlike in the original formulation (\cite{Metropolis1953}), we don't assume that $q_{ij} = q_{ji}$ (see Fig. \ref{fig:forward_and_reverse_action_sets} for an illustration of $q_{ij}$ and $q_{ji}$). These choices result in the following transition probabilities:
 
 $$p_{ij} =  \
 \begin{cases} 
    q_{ji} \frac{\pi_j}{\pi_i} & \mathrm{if} \; \frac{\pi_j q_{ji}}{\pi_i q_{ij}} \leq 1 \\
    q_{ij} & \mathrm{o.w.}
 \end{cases}$$
 
 It is easily verified that these $p_{ij}$ satisfy the detailed balance equation $\pi_i p_{ij} = \pi_j p_{ji}$ and thus guarantee the existence of a stationary distribution (see \cite{Metropolis1953} and \cite{Hastings1970}). The resulting $p_{ij}$ follow from the definition of $\pi_i = \frac{e^{\frac{1}{\tau}\Phi(x_i)}}{\sum_{x' \in \mathcal{X}} e^{\frac{1}{\tau}\Phi(x')}}$ and similarly $\pi_j$.
 
 Uniqueness of the stationary distribution follows from the irreducibility of the Markov chain induced by $P = \{p_{ij}\}$, which is the case for our choice of proposal and acceptance distribution because they assign a nonzero probability to every action in any restricted action set. By Theorem \ref{theorem:intermediate_config} and Corollary \ref{cor:deterministicCompletenes3D} we know that any state $x_j$ can be reached from any other state $x_i$ and vice versa. Thus any action path has nonzero probability and irreducibility follows.
\end{proof}

Theorem \ref{theorem:transition_probabilities} applies equally for 2D and 3D configuration. However, for 3D reconfiguration, the proof relies implicitly on the notion of groundedness to show irreducibility of the underlying Markov chain (through the computation of $R_i^{3D}$).
\begin{figure}
  \begin{center}
    \input{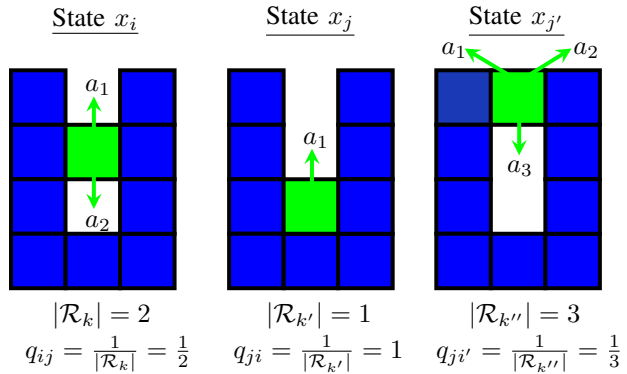}
    \caption{Example of forward and reverse actions with their associated proposal probabilities $q_{ij}$, $q_{ji}$, and $q_{ji}'$. Note that $x_i$, $x_j$, $x_j'$ are states of the entire configuration, and agent $k$ is the currently active agent.}
    \label{fig:forward_and_reverse_action_sets}
  \end{center}
\end{figure}
The following theorem requires the definition of stochastic stability.
\begin{definition}[Stochastic Stability \cite{Young1993}]
 A state $x_i \in \mathcal{X}$ is \textit{stochastically stable} relative to a Markov process $P^{\epsilon}$ if the following holds for the stationary distribution $\pi$ $\lim_{\epsilon \rightarrow 0} \pi^{\epsilon}_{x_i} > 0$.
\end{definition}
Note that the Markov process is defined through the transition probabilities in Theorem \ref{theorem:transition_probabilities} and the stationary distribution is a Gibbs distribution. Furthermore $\epsilon$ is equivalent to the learning rate $\tau$ in the following proof.
\begin{theorem}
\label{theorem:stochastic_stability}
 Consider the self-reconfiguration problem in Def. \ref{def:self_reconfiguration}. If all players adhere to the learning rule in Theorem \ref{theorem:transition_probabilities} then the unique stochastically stable state $x^*$ is the state that maximizes the global potential function. 
\end{theorem}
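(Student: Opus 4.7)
The plan is to apply the standard log-linear-learning / simulated-annealing argument: since Theorem \ref{theorem:transition_probabilities} gives a unique stationary distribution of Gibbs form, its limit as $\tau \to 0$ will concentrate on the maximizers of $\Phi$, and the problem-specific structure of Def. \ref{def:self_reconfiguration} will yield a unique maximizer. First I would recall from Theorem \ref{theorem:transition_probabilities} that on the finite state space $\mathcal{X}$ the unique stationary distribution is
\[
\pi^{\tau}(x) = \frac{e^{\Phi(x)/\tau}}{\sum_{x' \in \mathcal{X}} e^{\Phi(x')/\tau}}.
\]
Irreducibility (via Theorem \ref{theorem:intermediate_config} and Corollary \ref{cor:deterministicCompletenes3D}) is what justifies working with $\pi^\tau$ at all, so this input is essential.

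Next, let $\Phi^* = \max_{x \in \mathcal{X}} \Phi(x)$ and $\mathcal{X}^* = \{x \in \mathcal{X} : \Phi(x) = \Phi^*\}$. Normalizing numerator and denominator by $e^{\Phi^*/\tau}$ gives
\[
\pi^{\tau}(x) = \frac{e^{(\Phi(x) - \Phi^*)/\tau}}{\sum_{x' \in \mathcal{X}} e^{(\Phi(x') - \Phi^*)/\tau}}.
\]
For any $x \notin \mathcal{X}^*$ the exponent is strictly negative, so its contribution in the numerator vanishes as $\tau \to 0^+$, while the denominator tends to $\lvert \mathcal{X}^* \rvert \geq 1$. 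Hence $\lim_{\tau \to 0} \pi^{\tau}(x) = 0$ for non-maximizers and $\lim_{\tau \to 0} \pi^{\tau}(x) = 1/\lvert \mathcal{X}^* \rvert > 0$ for $x \in \mathcal{X}^*$. By the definition of stochastic stability, the stochastically stable states are precisely the elements of $\mathcal{X}^*$.

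To upgrade this to a unique stochastically stable state, I would invoke the structure of $\Phi$. Each $U_i(a) = 1/(\mathrm{dist}(a_i, \mathcal{C}_T)+1) \leq 1$ with equality iff $a_i \in \mathcal{C}_T$, so $\Phi(a) = \sum_{i \in \mathcal{P}} U_i(a) \leq N$. Since the restricted action sets $\mathcal{M}_s, \mathcal{M}_c$ exclude currently occupied cells, no two agents can sit on the same position. Assuming $\lvert \mathcal{C}_T \rvert = N$, the bound $\Phi = N$ is attained exactly when the agents collectively cover all of $\mathcal{C}_T$. Because the homogeneous setting identifies action profiles that differ only by a relabeling of agents, all such profiles correspond to the single configuration $\mathcal{C}_T$, so $\mathcal{X}^* = \{x^*\}$ with $x^*$ the target configuration.

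The main obstacle I expect is precisely this last uniqueness step. The Gibbs-concentration calculation is essentially routine once Theorem \ref{theorem:transition_probabilities} is in place; the subtle part is explaining why $\Phi$ has a unique maximizer. This needs the implicit assumption $\lvert \mathcal{C}_T \rvert = N$ and the convention that the state space consists of configurations (unordered) rather than ordered joint action profiles (which would admit $N!$ co-maximizers corresponding to relabelings). Without the former one cannot reach $\Phi = N$; without the latter the theorem would have to be restated as "the support of $\lim_{\tau \to 0} \pi^\tau$ is exactly the set of profiles covering $\mathcal{C}_T$."
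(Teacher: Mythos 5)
Your proposal is correct and follows essentially the same route as the paper: the paper's proof likewise observes that the stationary distribution is the Gibbs distribution $Pr[X=x] = e^{\Phi(x)/\tau}/\sum_{x'}e^{\Phi(x')/\tau}$ and that as $\tau \rightarrow 0$ all mass leaves the non-maximizers, deferring the details to \cite{Blume1993} and \cite{Marden2012}, whereas you carry out the limit explicitly. Your closing discussion of why the maximizer is \emph{unique} (requiring $\lvert \mathcal{C}_T \rvert = N$ and the identification of action profiles that differ only by relabeling of the interchangeable agents) addresses a point the paper passes over silently, and is a legitimate clarification rather than a deviation.
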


\begin{proof}
  Note that this result holds by definition of the desired target distribution, which is a Gibbs distribution centered at the state of maximum global potential and defines the probability of being in a state $x$ as $Pr[X=x] = \frac{e^{\frac{1}{\tau}\Phi(x)}}{\sum_{x' \in \mathcal{X}} e^{\frac{1}{\tau}\Phi(x')}}$. The learning rate or temperature $\tau$ represents the willingness of an agent to make a suboptimal choice (i.e. explore the state space). As $\tau \rightarrow 0$, $Pr[X=x] \rightarrow 0$ for all states $x \neq x^*$ which are not potential function maximizers (see \cite{Blume1993} and \cite{Marden2012}).
\end{proof}

Note that the maximum global potential is achieved when all agents are at a target position $a_i \in \mathcal{C}_T$. Algorithm \ref{alg:global} shows an implementation of Theorem \ref{theorem:transition_probabilities}. In Algorithm \ref{alg:global}, similarly to the Metropolis-Hastings algorithm in  \cite{Hastings1970}, $p_{ii} = 1 - \sum_{j \neq i} p_{ij}$.

\begin{algorithm}
  \SetKwInOut{Input}{input}\SetKwInOut{Output}{output}
  \SetKw{Return}{return}
  \Input{Current and target configuration $\mathcal{C}$ and $\mathcal{C}_T$}
  \While{True} {
    Randomly pick an agent $k$ in state $x_i$
    
    Compute restricted action set $\mathcal{R}_k$
    
    Select $a_{i \rightarrow j, k} \in \mathcal{R}_k$ with probability $q_{ij} = \frac{1}{|\mathcal{R}_k|}$
    
    Compute $\alpha_{ij} = \min \left\{1, \frac{q_{ji}}{q_{ij}} e^{\frac{1}{\tau} (\Phi(x_j)- \Phi(x_i))} \right\}$
    
    \eIf{$\alpha_{ij} = 1$} {
      $x_{t+1} = x_j$
    }{
       $x_{t+1} =  \
	\begin{cases} 
	    x_{j} & \mbox{with probability } \alpha_{ij} \\
	    x_{i} & \mbox{with probability } 1 - \alpha_{ij}
	\end{cases}$
    }
  }
  \caption{Global game-theoretic learning algorithm. Note that state $x_j$ is the result of agent $k$ applying action $a_{i \rightarrow j, k}$ and $x_i$ and $x_j$ refer to states of the entire configuration. Also note that $q_{ii} \notin \mathcal{R}_k$ but $p_{ii} \neq 0$.}
  \label{alg:global}
\end{algorithm}

\section{A decentralized Algorithm}
\label{sec:decentralizedAlgorithm}
One shortcoming of Algorithm \ref{alg:global} is its centralized nature that requires the computation of a global potential function $\Phi(x_i)$ and depends on the entire current configuration $x_i$. A decentralized algorithm is desirable to execute the global learning rule on a team of agents. The formulation of the self-reconfiguration problem as a potential game allows us to rewrite the transition probabilities in a decentralized fashion as follows.

\begin{corollary}
\label{cor:decentralizedTransitionProbabilities}
 The global learning rule of Theorem \ref{theorem:transition_probabilities} can be decentralized such that each agent can execute it with local information only.
\end{corollary}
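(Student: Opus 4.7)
The plan is to leverage the potential game structure established in Proposition \ref{prop:potentialGame} to replace the globally computed potential difference $\Phi(x_j)-\Phi(x_i)$ inside the Metropolis exponent by a purely local utility difference. Since a single agent $k$ is activated per step and $x_j$ differs from $x_i$ only by $k$'s unilateral deviation from $a_k$ to $a_k'$, the potential game identity
\[
  \Phi(a_k',a_{-k}) - \Phi(a_k,a_{-k}) = U_k(a_k',a_{-k}) - U_k(a_k,a_{-k})
\]
lets me rewrite the exponent as $\frac{1}{\tau}(U_k(x_j) - U_k(x_i))$. Because $U_k(a) = 1/(\mathrm{dist}(a_k,\mathcal{C}_T)+1)$ depends only on $k$'s own position and on knowledge of the target configuration, agent $k$ can evaluate both terms without querying the global state.

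Next, I would argue that the two proposal probabilities $q_{ij}$ and $q_{ji}$ are also locally computable by agent $k$. By the construction in the proof of Theorem \ref{theorem:transition_probabilities}, $q_{ij}=1/|R_k(x_i)|$ is determined from the restricted action set the active agent uses to sample its trial move, and $q_{ji}=1/|R_k(x_j)|$ is the size of the restricted action set that $k$ would have after the tentative move, when considering the reverse move back to $a_k$. In the 2D case, $R_k^{2D}$ is defined via the sliding and corner motion sets (Def. \ref{def:actionSet2D}), which only require knowing which of $k$'s neighbouring cells are occupied; in the 3D case (Def. \ref{def:actionSet3D}) an additional groundedness check on $G_{-k}$ is needed, which agent $k$ can perform from its local neighbourhood through the depth-first search mentioned in Section \ref{sec:actionSetComputation}, assuming limited communication with its neighbours, exactly as anticipated in the introduction.

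Putting these two reductions together, I would conclude that the transition rule of Theorem \ref{theorem:transition_probabilities} can be rewritten in the decentralized form
\[
  p_{ij} = \begin{cases}
     \dfrac{|R_k(x_i)|}{|R_k(x_j)|}\, e^{\frac{1}{\tau}(U_k(x_j)-U_k(x_i))} & \text{if } \tfrac{|R_k(x_i)|}{|R_k(x_j)|}e^{\frac{1}{\tau}(U_k(x_j)-U_k(x_i))} \le 1, \\[4pt]
     \dfrac{1}{|R_k(x_i)|} & \text{otherwise,}
  \end{cases}
\]
where every quantity on the right-hand side is computable from data available to agent $k$ alone (its own position, its restricted action set at the current and tentative configurations, and the target configuration).

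I expect the main obstacle to be the subtle asymmetry in the 3D case: computing $q_{ji}$ requires agent $k$ to evaluate $|R_k^{3D}|$ \emph{at the tentative state $x_j$}, i.e. after the hypothetical move has been executed. This forces $k$ to predict whether its (new) neighbours would remain grounded on $G_{-k}$ evaluated at $x_j$, not at $x_i$, and to confirm that the groundedness status of its former neighbours has been preserved. Verifying that this prediction is still obtainable through a bounded-radius depth-first search — rather than degenerating into a global query — is the step that needs the most care; once this is granted, the rest of the argument is a direct substitution using the potential game identity.
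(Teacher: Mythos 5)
Your proposal is correct and follows essentially the same route as the paper: it invokes the exact-potential identity from Proposition \ref{prop:potentialGame} to replace $\Phi(x_j)-\Phi(x_i)$ by $U_k(a_k')-U_k(a_k)$ and then observes that $q_{ij}$, $q_{ji}$, and the utilities are all locally computable (your concern about the 3D groundedness check at the tentative state is addressed in the paper only by the remark following its proof, which concedes that ``local'' may mean up to $N-1$ hops of depth-first search to the ground plane). One minor slip: in the accept branch of your final display you wrote $\tfrac{q_{ji}}{q_{ij}}e^{\frac{1}{\tau}(U_k(x_j)-U_k(x_i))}$, which is the acceptance probability $\alpha_{ij}$ rather than the full transition probability $p_{ij}=q_{ij}\alpha_{ij}=q_{ji}e^{\frac{1}{\tau}(U_k(x_j)-U_k(x_i))}$; this does not affect the decentralizability argument itself.
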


\begin{proof}
 Note that for agent $k$, we can express $x_j$ and $x_i$ as $(a_k', a_{-k})$ and $(a_k, a_{-k})$ respectively.  According to Proposition \ref{prop:potentialGame}, we can then rewrite $\Phi(x_j) - \Phi(x_i)$ as follows.
 \begin{eqnarray}
    U_k(a_k') - U_k(a_k) &=& U_k(a_k', a_{-k}) - U_k(a_k, a_{-k}) \nonumber \\
			 &=&\Phi(a_k', a_{-k}) - \Phi(a_k, a_{-k}) \nonumber \\
			 &=& \Phi(x_j) - \Phi(x_i) \nonumber
 \end{eqnarray}
 Therefore, we can rewrite the transition probabilities as follows.
 $$p_{ij} =  \
 \begin{cases} 
  q_{ji}e^{\frac{1}{\tau}(U_k(a_k') - U_k(a_k))} & \mathrm{if} \; e^{\frac{1}{\tau}(U_k(a_k') - U(a_k))}\frac{q_{ji}}{q_{ij}} \leq 1 \\
  q_{ij} & \mathrm{o.w.}
 \end{cases}$$
 Since $q_{ij}$, $q_{ji}$, $U_k(a_k')$, as well as $U_k(a_k)$ can be computed with local information, so too can the transition probabilities. The stationary distribution of the process described by $p_{ij}$ is the same Gibbs distribution as in Theorem \ref{theorem:transition_probabilities}.
\end{proof}

Note that local can mean multiple hops, because the computation of restricted action sets requires to maintain groundedness of all neighboring agents. Verifying groundedness requires a DFS search to the ground plane, which in the worst case can take $N - 1$ hops.

Algorithm \ref{alg:local} shows a decentralized implementation of Algorithm \ref{alg:global} and Corollary \ref{cor:decentralizedTransitionProbabilities}. Note that a transition from configuration $x_i$ to $x_j$ is accomplished by agent $k$ executing action $a' \in \mathcal{R}_k$ starting at its current location $a$. Therefore, we can interpret $q_{ij}$ as a transition probability for a forward action and $q_{ji}$ as a reverse action (see Fig. \ref{fig:forward_and_reverse_action_sets}). 

\begin{algorithm}
  \SetKwInOut{Input}{input}\SetKwInOut{Output}{output}
  \SetKw{Return}{return}
  \Input{Target configuration $\mathcal{C}_T$}
  Start clock (see \cite{Boyd2006})
  
  \While{True} {
    \If{Clock ticks} {
      Compute current restricted action set $\mathcal{R}_k$
      
      Select $a' \in \mathcal{R}_k$ with probability $q = \frac{1}{|\mathcal{R}_k|}$
      
      Compute $\alpha = \min \left\{1, \frac{|\mathcal{R}_k|}{|\mathcal{R}_{k'}|} e^{\frac{1}{\tau} (U(a')- U(a))} \right\}$
      
      \eIf{$\alpha = 1$} {
	$x_{t+1} = a'$
      }{
	$x_{t+1} =  \
	  \begin{cases} 
	      a' & \mbox{with probability } \alpha \\
	      a  & \mbox{with probability } 1 - \alpha
	  \end{cases}$
      }
    }
  }
  \caption{Self-reconfiguration using game-theoretic learning local version that each agent executes. Note that $x_t$, $x_{t+1}$ refer to consecutive states of the active agent.}
  \label{alg:local}
\end{algorithm}
%
\section{Implementation}
\label{sec:implementation}
The global version of the algorithm was implemented and evaluated in Matlab. For the simulations shown in Fig. \ref{fig:convergenceTimes}, we used a $\tau = 0.001$ that struck a balance between greedy maximization of agent utilities and exploration of the state space through suboptimal actions. 
Agents' actions are restricted according to the action set computation outlined in Section \ref{sec:actionSetComputation} and the motion model in Section \ref{sec:motionModel}. Restricted action sets depend on the agents joint action and the environment - in our simulations only the ground plane. Agents' positions are initialized above the ground plane such that their z-coordinate $z \ge 1$. In a straightforward extension to this algorithm, obstacles can be added to restrict the environment even further. 

Fig. \ref{fig:convergenceTimes} shows convergence results of Algorithm \ref{alg:global} of configurations containing 10, 20, and 30 agents. Four types of reconfigurations have been performed: 2D to 2D, 2D to 3D, 3D to 2D, and 3D to 3D. The vertical lines in Fig. \ref{fig:convergenceTimes} represent the average time to convergence of all four types of reconfigurations of a certain size (e.g. the leftmost line represents average convergence of a configuration of 10 agents). Convergence is achieved, if the configuration reaches a global potential of $\Phi = N$, i.e. every agent has a utility of $U_i = 1$. Note that in the scenarios of Fig. \ref{fig:convergenceTimes}, the target configuration was offset from the initial configuration by a translation of 10 units along the x-axis. One can observe that at the beginning of each reconfiguration the global potential ramps up very fast (within a few hundred time steps), but the asymptotic convergence to the global optimum can be slow (see the case 3D to 2D for 30 agents). An example of a 2D to 2D reconfiguration sequence is shown in Fig. \ref{fig:reconfigurationExample}.

\begin{figure*}
 \centering
 \includegraphics[width=1\textwidth]{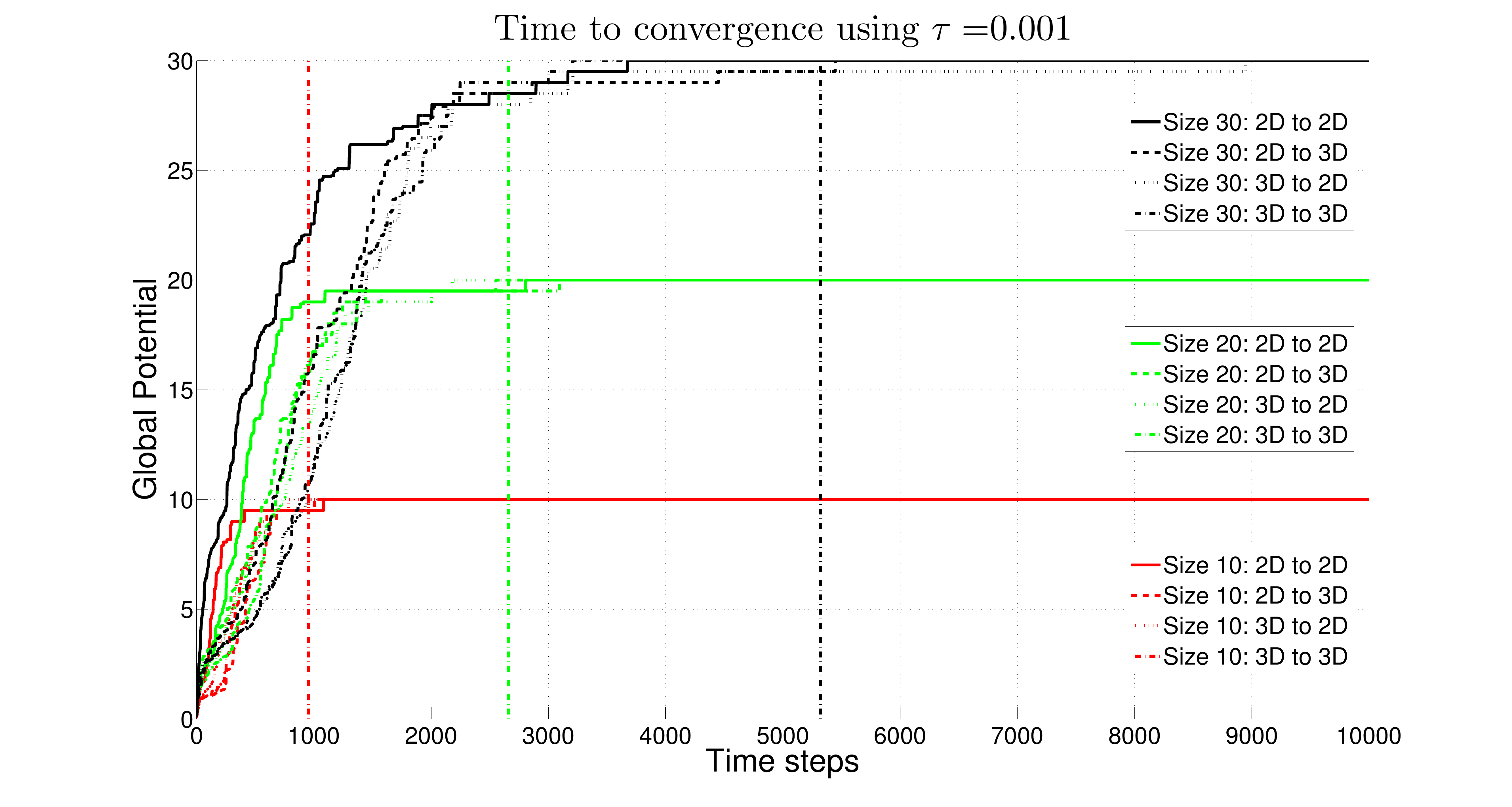}
 \caption{Convergence times for different types of configurations and sizes ranging from 10 to 30 agents.}
 \label{fig:convergenceTimes}
\end{figure*}

\section{Conclusion}
\label{sec:conclusion}
In this paper, we have applied the potential game formalism to the self-reconfiguration problem and developed a stochastic algorithm that converges to the global potential function maximizer. Under the assumptions of groundedness, we have introduced a decentralized approach to self-reconfiguration that requires little communication, does not rely on a centralized decision maker or precomputation, and converges to the global optimum. Our simulation results suggest that this formulation of the self-reconfiguration problem is indeed a feasible approach and can solve self-reconfiguration in a decentralized fashion. 

\bibliographystyle{plain}


\end{document}